\algnewcommand{\LineComment}[1]{\State \(\triangleright\) #1}
\algnewcommand\algorithmicinput{\textbf{Input:}}
\algnewcommand\INPUT{\item[\algorithmicinput]}
\algnewcommand\algorithmicoutput{\textbf{Output:}}
\algnewcommand\OUTPUT{\item[\algorithmicoutput]}
\DeclareMathOperator*{\argmax}{arg\,max}
\DeclareMathOperator*{\argmin}{arg\,min}
\newcommand{\ce}[1]{\lceil #1 \rceil}
\newcommand{\opt}{\text{OPT}}
\newcommand{\vs}{\text{VS}}
\title{
An FPTAS of Minimizing Total Weighted Completion Time on Single Machine with Position Constraint
}
\titlerunning{Single Machine Scheduling with Position Constraint}
\author[1]{Gruia C\v{a}linescu}
\author[2]{Florian Jaehn}
\author[3]{Minming Li}
\author[4]{Kai Wang}
\affil[1]{Department of Computer Science, Illinois Institute of Technology, Chicago, IL 60616, USA\\
\texttt{calinescu@iit.edu}}
\affil[2]{Management Science and Operations Research, Helmut Schmidt University -- University of the Federal Armed Forces Hamburg, Holstenhofweg 85, D-22043 Hamburg, Germany\\
\texttt{florian.jaehn@hsu-hh.de}
}
\affil[3]{Department of Computer Science, City University of Hong Kong, 83 Tat Chee Avenue, Kowloon, Hong Kong SAR, China\\
\texttt{minming.li@cityu.edu.hk}
}
\affil[4]{Department of Computer Science, City University of Hong Kong, 83 Tat Chee Avenue, Kowloon, Hong Kong SAR, China\\ \texttt{kai.wang@my.cityu.edu.hk}
}
\authorrunning{G. C\v{a}linescu et. al.} 
\subjclass{F.2.2}
\keywords{
FPTAS, Scheduling, Approximation Algorithm
}
\begin{document}
\maketitle

\begin{abstract}
In this paper we study the classical scheduling problem of minimizing the total weighted completion time on a single machine 
with the constraint that one specific job must be scheduled at a specified position.
We give dynamic programs with pseudo-polynomial running time, and a fully polynomial-time approximation scheme (FPTAS).
\end{abstract}

\section{Introduction}
In general, a major challenge of scheduling problems is the determination of a job sequence for each machine involved. Especially in non-preemptive one machine settings without idle times, this is usually the only task to be performed. In this context, scheduling problems appear without restrictions on this sequence (e.g.\ $1||\sum T_j$, minimize total tardiness) or with restrictions on the sequence (e.g.\ $1|r_j, prec|\sum C_j$, minimize total completion time). Restrictions on the sequence are commonly either time dependent or linked to job pairs. Examples for time dependent restrictions are release dates, deadlines, or time dependent maintenance activities. Precedence constraints are a typical restriction based on job pairs. In this paper, we elaborate on a different restriction based on the position of a job in the sequence. To be more precise, we force one job to have a fixed position within the sequence of jobs. 

The practical and theoretical motivation for such a scheduling problem is twofold. Firstly, such a job that has a fixed position in the sequence could be considered as a maintenance operation. 
Classically, maintenance is also considered to be time dependent, e.g.\ by modeling predetermined machine unavailability intervals (\cite{lee1996, blazewicz2003, low2010}), by allowing a maximum time between two maintenances, which is often referred to as ``tool changes'' (\cite{chen2008, costa2016}), or maintenances may be inserted arbitrarily in order to reduce the processing times of the following jobs (\cite{kubzin2006, rustogi2012}). However, just lately, position dependent maintenance operations have been introduced by Drozdowski, Jaehn and Paszkowski \cite{drozdowski2016}. 
Amongst others, they motivate position dependent maintenance activities with wear and tear of jet engines or aircraft wheels, which is rather caused by the number of flights (because of the climb flight and thrust reversal for the engines) than by the length of the flight. So the problem considered here can be seen as the special case in which exactly one position dependent maintenance activity is necessary.

Secondly, our problem is a special case of scheduling with non-negative inventory constraints, as was introduced by Briskorn et al. \cite{briskorn2010}. Here, each job either delivers or removes items of a homogeneous good to or from an inventory. Jobs that remove items can only be processed if the required number of items are available, i.e.\ only if the inventory level does not become negative. This problem relates to ours, in which a job is fixed to position $k$, as follows. The job fixed on position $k$ can be considered as the only job removing items from the inventory, and $k-1$ jobs are required to deliver items before this job can be scheduled. If the parameter settings of the fixed job are chosen such that this job is to be scheduled as early as possible, it is forced to be on position $k$. Analogously, the fixed job can be modeled as the only one delivering to the inventory so that it must be scheduled the latest on position $k$. Parameter settings then need to ensure that it is not scheduled earlier.

In this paper we continue the work of \cite{briskorn2010} on problem $1|inv|\sum w_jC_j$.
We consider one machine with the above mentioned inventory constraint with the objective of minimizing total weighted completion time. 
Briskorn et al. \cite{briskorn2010} show that this problem is strongly NP-hard in the general case and they propose various special cases, which are easily solvable and some which are still open. They especially differ between the sets of jobs that deliver to the inventory and that remove goods from the inventory. As mentioned before, we consider a special case in which one of the two sets only consists of one job. For this problem setting, we propose a fully polynomial time approximation scheme (FPTAS). 

Several special cases and generalizations of problem $1|inv|\sum w_jC_j$ have been analyzed in the literature. 
Briskorn and Pesch \cite{briskorn2013b} consider the generalization with a maximum inventory level. They show that even finding a feasible solution is NP-hard and they propose heuristics. Another generalization is analyzed by Kononov and Lin \cite{kononov2010}. Here, each job consumes some items at the beginning of its processing time and it adds to the inventory a number of items at its completion time. They show NP-hardness of several special cases and present some approximation algorithms for further special cases. 
Morsy and Pesch \cite{morsy2015} consider a special case in which all jobs delivering to the inventory must be equal (concerning processing time, weight, and inventory modification) and the remaining jobs must also share some characteristics. For this setting, a 2-approximation is presented. Optimality criteria and an exact branch-and-bound algorithm for the standard problem $1|inv|\sum w_jC_j$ are proposed by Briskorn et al. \cite{briskorn2013a}.

There are some problems discussed in the literature, which are closely related to $1|inv|\sum w_jC_j$. First of all, Briskorn and Leung \cite{briskorn2013c} consider the problem with maximum lateness objective function. They propose some optimality criteria, lower bounds, and heuristics, which are then used in a branch-and-bound framework. There are various papers (\cite{gyorgyi2014, gyorgyi2015a, gyorgyi2015b, kis2015}) that analyze a ``non-renewable resource constraint'', which means that each job removes goods from the inventory, but the inventory is filled automatically at predetermined points in time. So contrary to the above mentioned inventory constraint, which was exclusively based on the job sequence, here the constraint is partially time based. The papers on this problem mostly focus on minimizing makespan. Only Kis \cite{kis2015} considers the same objective function and presents a strong NP-hardness proof and an FPTAS for a special case.   

We formulate the problem with the constraint that a fixed amount of jobs must be finished when the special job (refer to as pivot job) starts. 
We present an FPTAS in this paper. 
First we propose a dynamic programming whose running time depends on job processing times. 
To obtain an FPTAS, we use rounding technique: we round the job processing times so that they are polynomial in $n$ and $1/\epsilon$, then we obtain the optimal schedule for the rounded jobs via dynamic programming and apply that schedule to original jobs.
However, the rounding approach does not guarantee $(1+\epsilon)$ - approximation.
To make it work, we further discover an important property when the rounding technique fails: the job with the largest weight can not be scheduled after (or the same as) the job with the largest processing time. The reason behind is that when this property breaks, the objective value of the optimal schedule is large enough, which makes the dynamic programming solution good enough.
With this property, on one hand we apply the rounding technique and on the other hand we put these two special jobs before or after the pivot job accordingly and solve the subproblem.

The remainder of the paper is organized as follows. The problem formulation is given in Section~\ref{sec:form}. In Section~\ref{sec:dp} we propose two dynamic programs to solve the problem, with running time polynomial in job processing times and job weights, respectively. 
Then we use the dynamic programming to design an FPTAS in Section~\ref{sec:fptas}. 
In Section~\ref{sec:trimspace}, we present another FPTAS as a comparison.
In Section~\ref{sec:conclusion} we conclude our work.
\section{Formulation} \label{sec:form}
The instance of the problem is a set of $n$ jobs $J = \{1,2,...,n\}$, a specified job $c \in J$ and an integer $k \in [1, n]$.
Each job $j \in J$ is defined by its weight $w_j$ and its processing time  $s_j$ (or sometimes refereed to as workload, size).
A schedule $\sigma$ over instance $J$ is an order of jobs, we write $i \preceq_{\sigma} j$ (resp. $i \succeq_{\sigma} j$) meaning that job $i$ precedes (resp. succeeds) job $j$ or jobs $i,j$ are the same in schedule $\sigma$.
The completion time of a job in a feasible schedule is the time when the job finishes.
Assume that the machine is never idle unless there is no more job to be processed, we define $C_j^{\sigma} = \sum_{i \preceq_{\sigma} j}~s_i$ as the \emph{completion time} for each job $j \in J$ in schedule $\sigma$.
Also, we denote $\delta_j = \frac{w_j}{s_j}$ as the {\em density} of job $j$.
The objective is to minimize total weighted completion time on a single machine such that there are exactly $k - 1$ jobs scheduled before job $c$, i.e. 
$
\min_{\sigma} \sum_{j \in J} w_j C_j^{\sigma} ~~~\text{s.t.}~~~  k = | \{~j ~|~ j \preceq_{\sigma} c, ~j \in J~\} |
$
where $k$ is part of the input.

\begin{theorem}
In the optimal solution, jobs that are scheduled before (or after) job $c$ must follow \emph{Smith's order}.
\end{theorem}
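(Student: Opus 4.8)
The plan is to establish this by the classical adjacent-pairwise exchange argument underlying Smith's rule, with one extra feasibility check to handle the position constraint on job $c$. Recall that Smith's order means arranging jobs in non-increasing order of density $\delta_j = w_j/s_j$. Suppose for contradiction that some optimal schedule $\sigma$ violates this order on one side of $c$: there exist two jobs $i,j$ lying on the same side of $c$ (both before, or both after) with $i$ immediately preceding $j$, yet $\delta_i < \delta_j$. I would then build $\sigma'$ from $\sigma$ by swapping only $i$ and $j$ and leaving everything else fixed.

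First I would verify that $\sigma'$ remains feasible. Because $i$ and $j$ sit on the same side of $c$, exchanging the two does not alter the number of jobs preceding $c$; hence $\sigma'$ still places exactly $k-1$ jobs before $c$. This is the one place where the position constraint matters, and it is precisely why the exchange must be confined to a single side of $c$. Next I would note that only $C_i^{\sigma}$ and $C_j^{\sigma}$ change under the swap: writing $t$ for the completion time of the job immediately preceding the pair (with $t=0$ if there is none), the pair still occupies the time interval $(t, t+s_i+s_j]$, so every other job keeps its completion time and drops out of the comparison.

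The remaining computation is the standard one. In $\sigma$ the pair contributes $w_i(t+s_i) + w_j(t+s_i+s_j)$, while in $\sigma'$ it contributes $w_j(t+s_j) + w_i(t+s_i+s_j)$, so
\[
\sum_{h \in J} w_h C_h^{\sigma} - \sum_{h \in J} w_h C_h^{\sigma'} = w_j s_i - w_i s_j = s_i s_j(\delta_j - \delta_i) > 0,
\]
where the final inequality uses $\delta_j > \delta_i$. Thus $\sigma'$ strictly improves on $\sigma$, contradicting optimality. I would conclude that every consecutive same-side pair in an optimal schedule is in non-increasing density order, and by transitivity the block before $c$ and the block after $c$ each follow Smith's order.

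I expect the arithmetic to pose no real obstacle, being the familiar weighted-completion-time exchange calculation. The only point requiring genuine care is the feasibility step: one must restrict the exchange to jobs on the same side of $c$, since swapping a job from before $c$ with one from after $c$ would shift $c$ out of position $k$ and break the constraint.
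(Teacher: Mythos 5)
Your proof is correct. Note that the paper itself states this theorem \emph{without} proof: it simply appeals to the classical optimality of Smith's rule, leaving the adaptation to the position-constrained setting implicit. What you have written is exactly the argument the paper is relying on: the standard adjacent-pairwise exchange computation, $w_j s_i - w_i s_j = s_i s_j(\delta_j - \delta_i) > 0$, together with the one observation that genuinely involves the constraint, namely that a swap confined to two jobs on the same side of $c$ preserves the number of jobs preceding $c$ and hence feasibility. Your reduction from ``the block violates Smith's order'' to ``some adjacent pair violates it'' is also sound, because the jobs before $c$ (and likewise those after $c$) occupy contiguous positions, so adjacency within a block coincides with adjacency in the full schedule. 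In short, you have supplied the proof the authors omitted, and it is the right one; there is no gap.
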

\noindent
In classical \emph{Smith's Rule} \cite{smith1956various} (or Smith's order), jobs are executed in non-increasing order of its density $\delta_j$. 
Smith's Rule has been proven to be optimal when there is no position constraint of job $c$. However, Smith'r Rule does not work in this problem as in the special case where $w_c$ approaches to infinity and the jobs that are placed before the pivot job in the optimal solution should have the smallest processing times.
In the sequel, we assume job $c$ is indexed as $n$, and the remaining jobs $J \setminus \{c\}$ are sorted in Smith's order, i.e. 
$\delta_1 \ge \delta_2 \ge ... \ge \delta_{n-1}$.

\section{Dynamic Programming with Side Constraints}
\label{sec:dp}
In this section, we propose pseudo-polynomial dynamic programs to solve this problem.
Given integer $k$, job $c \in J$ and sets of jobs $J$, $H \subset J \setminus \{c\}, ~ B \subset J \setminus \{c\}$ such that $H \cap B = \emptyset$,
we aim to find the optimal schedule such that
\textbf{(i.)} jobs from $H$ are scheduled  before job $c$
\textbf{(ii.)} jobs from $B$ are scheduled after job $c$
\textbf{(iii.)} there are exactly $k-1$ jobs scheduled before job $c$.

We say that job $j$ is {\em assigned} when the order of job $j$ and job $c$ is determined and  {\em unassigned} otherwise.
Therefore, jobs $H \cup B$ are {\em assigned} and let $U = J \setminus (B \cup H \cup \{c\})$ be the {\em unassigned} jobs.
Let $\hat{s_j} = \ce{\lambda \cdot s_j}$ be the rounded job processing time of job $j \in J$ with a given parameter $\lambda$. We would see later that $\lambda$ is polynomial in $n$, $1/\epsilon$, and linear in $1/s_{max}$ where $s_{max}$ is the maximum job processing time among all unassigned jobs. In other words, we make sure that for each unassigned job $j$, $\hat{s_j}$ is polynomial in $n$ and $1/\epsilon$. Similarly, we could also round job weights with a different $\lambda$ , i.e. $\hat{w_j} = \ce{\lambda \cdot w_j},~ \forall j \in J$.
In the following part of this section, we give two dynamic programs to solve the rounded jobs based on job processing time in Section~\ref{dp-size} and based on job weight in Section~\ref{dp-weight}, and denote $f_{s}(k,c,J,H,B,\lambda)$ and $f_{w}(k,c,J,H,B,\lambda)$ as the optimal schedule (the order of jobs) returned by the corresponding dynamic programming for the rounded jobs respectively.

\subsection{Based On Job Processing Time} \label{dp-size}
We propose a dynamic programming with pseudo-polynomial running time. That is, we assume for each job that its processing time has already been rounded into integers, and the running time of the dynamic program is polynomial in the number of jobs and the maximum job processing time.

Given a set of jobs $J^{*} \subseteq J$, we denote $[J^{*}] = \sum_{ j \in J^{*}} \hat{s_j}$ as the total processing time of jobs $J^{*}$.
For a feasible schedule $\gamma$, let $U(\gamma) = \{i~|~i\in U,i \prec_{\gamma} c\}$ be the subset of jobs in $U$ which are scheduled before job $c$ in schedule $\gamma$.

Let $n = |J|$ and $\hat{S} = [U]$.
A {\em partial schedule} of jobs $J' \subseteq J$ assigns to each of these jobs $J'$ a valid completion time, making sure that each job could be finished within that valid completion time (i.e. jobs do not overlap). 
First, we try every possibility of the completion time of job $c$, i.e. we aim to find the optimal schedule $\sigma$ such that
$[ U(\sigma) ] = L$ where we test every possibility of $L$ from $\{0, 1, \ldots, \hat{S}\}$.
Hence, we denote $C_c(L) = L + \hat{s_c} + [H]$ as the completion time of job $c$ when $L$ is fixed.
Afterwards, we consider jobs $J' = \{1,...,j\}$ and focus on two parameters $e,E$ in the optimal schedule $\sigma$ where $|J' \cap U(\sigma)| = e $ and $[J' \cap U(\sigma)] = E$.
Finally, we test every possibility of $e,E$ in the dynamic programming.
\begin{definition}
Let $dp(e,E,j)$ be the total weighted completion time of jobs $J' = \{1,...,j\}$ in an optimal partial schedule such that 
there are $e$ jobs from $J' \cap U$ which are scheduled before job $c$ with total processing time $E$, 
where $e \in \{0, \ldots, k-1\}$, 
$j \in \{0, \ldots, n-1\}$, 
$E \in \{0, 1, \ldots, L\}$. 
$dp(e,E,j)$ is taken to be infinity if no such partial schedule exists.
\end{definition}

\noindent
To find the optimal schedule of jobs $J^{'}$, we fix the schedule of job $j$ and then solve the subproblem.
We show that the completion time of job $j$ could be calculated once job $j$ is determined to be scheduled before or after job $c$. 
%We put the proof of Lemma~\ref{lmm-dp} in full version.
\begin{lemma}\label{lmm-dp}
For $j = 0$, we have $dp(e,E,j) = 0$ if $e = 0, E = 0$ and $dp(e,E,j) = \infty$ otherwise.
For $j > 0$, we have $dp(e,E,j) = $
\[ \min
\left\{
\begin{array}{ll}
dp(e,E,j-1) + 
w_j \left(E +  [H \cap J^{'}] \right),   &~\mbox{if}~ j \in H \\

dp(e-1,E-\hat{s_j},j-1) + 
w_j \left(E +  [H \cap J^{'}] \right), & ~\mbox{if}~ j \in U, e > 0, E \geq \hat{s_j} \\

dp(e,E,j-1) + w_j 
\left( C_c(L) + ([U \cap J^{'}] - E) +  [B \cap J^{'}]\right),
& ~\mbox{if}~ j \in B \cup U
\end{array}
\right.
\]

$f_{s}(k,c,J,H,B,\lambda) = \min_{L \in [0,\hat{S}]} ~ w_c \cdot C_c(L) + dp(k - 1 - |H|,L,n-1)$

\end{lemma}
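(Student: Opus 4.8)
The plan is to prove the recurrence by induction on $j$, using two facts: that the objective $\sum_j w_j C_j^{\sigma}$ is additive over jobs, and that by the Theorem the jobs placed before $c$ (resp.\ after $c$) appear in Smith's order, i.e.\ in increasing order of their index. The base case $j = 0$ is immediate: the empty partial schedule has cost $0$ and is the unique one realizing $e = 0, E = 0$, while no schedule realizes any other pair $(e,E)$, so $dp(e,E,0) = \infty$ in those cases.

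For the inductive step, the heart of the argument is that, once $L$ (and hence the constant $C_c(L)$) is fixed, the \emph{final} completion time of job $j$ in the full schedule can be read off from the side it is assigned to together with the jobs of $J' = \{1,\dots,j\}$ already on that side. First I would observe that if $j$ is placed before $c$, then by the Smith's order property every before-$c$ job with index larger than $j$ is scheduled strictly after $j$, so $C_j^{\sigma}$ equals the total rounded processing time of the before-$c$ jobs in $J'$, namely the forced jobs $H \cap J'$ together with the chosen $U$-jobs of total processing time $E$; this gives $C_j^{\sigma} = E + [H \cap J']$. Symmetrically, if $j$ is placed after $c$, the after-$c$ jobs with index at most $j$ are exactly those preceding (or equal to) $j$ on that side, so $C_j^{\sigma} = C_c(L) + ([U \cap J'] - E) + [B \cap J']$, where $[U \cap J'] - E$ is the rounded processing time of the $U$-jobs of $J'$ lying after $c$. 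The crucial point, and the step I expect to be the main obstacle, is justifying that these quantities are genuinely the full-schedule completion times and depend only on the index-$\le j$ jobs on the same side; this is exactly where Smith's order is indispensable, since it pins down the relative order within each side so that higher-indexed same-side jobs contribute nothing to $C_j^{\sigma}$.

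With the completion-time formula in hand, I would argue optimal substructure by a cut-and-paste exchange: deleting job $j$ from an optimal partial schedule for state $(e,E,j)$ yields a feasible partial schedule of $\{1,\dots,j-1\}$ whose state is $(e-1, E - \hat{s_j})$ if $j \in U$ is placed before $c$, and $(e,E)$ in every other case (namely $j \in H$, $j \in B$, or $j \in U$ placed after $c$, since only before-$c$ $U$-jobs are counted by $e$ and $E$). This residual schedule must itself be optimal for its state, or we could substitute a cheaper one and improve the whole, and it is additivity of the objective that licenses the substitution. Conversely, appending job $j$ on the appropriate side to an optimal schedule for the smaller state produces a candidate value for $dp(e,E,j)$; taking the minimum over the legal cases, namely the $H$-branch when $j \in H$, the $B$-branch when $j \in B$, and for $j \in U$ the better of the before-$c$ branch (which requires $e > 0$ and $E \ge \hat{s_j}$) and the after-$c$ branch, reproduces the stated recurrence.

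Finally, for the closed form of $f_{s}$, I would note that placing $c$ at position $k$ forces exactly $k-1$ jobs before it, of which the $|H|$ jobs of $H$ are mandatory, leaving $k - 1 - |H|$ slots to be filled by $U$-jobs; hence the relevant entry is $dp(k-1-|H|, L, n-1)$, which accounts for every job except $c$. Adding the weighted completion time $w_c \cdot C_c(L)$ of $c$ itself and minimizing over all admissible choices $L \in [0,\hat{S}]$ of the total rounded $U$-processing time placed before $c$ yields the optimal value, completing the proof.
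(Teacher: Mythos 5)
Your proof is correct and takes essentially the same route as the paper's: both hinge on the observation that, because of the Smith's-order indexing, the completion time of job $j$ is determined solely by its side and the state quantities ($E + [H \cap J']$ before $c$, and $C_c(L) + ([U \cap J'] - E) + [B \cap J']$ after $c$), and then verify correctness by checking the two possible placements of job $j$. The only difference is that you spell out the standard optimal-substructure (cut-and-paste) step explicitly, which the paper leaves implicit in its claim that ``we have tried every possibility for the schedule of job $j$.''
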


%--------------
\begin{proof}
Let $\sigma$ be an optimal schedule.
Without loss of generality, we assume that $ [U(\sigma)] = L$ because we try every possibility of value $L$ from $\{0, 1, \ldots, \hat{S}\}$.
Then, we will prove that Lemma~\ref{lmm-dp} gives the 
correct optimal solution.

For the base case, we have $j = 0$, i.e. $J'=\emptyset$.  The lemma is correct because no feasible schedule exists when $e \not = 0$ or $E \not = 0$.
For the case $j>0$, we prove the lemma by claiming that we have tried every possibility for the schedule of job $j$. More specifically, we show that when job $j$ is scheduled before (or after) job $c$, the completion time of job $j$ could be computed directly. Therefore, we only need to try two possibilities (before or after job $c$) for the schedule of job $j$.

In the following we show that the completion time of job $j$ is correct.
Without loss of generality, we assume 
$|J' \cap U(\sigma)| = e $ and $[J' \cap U(\sigma)] = E$ because we try every possibility of parameter $e$ and $E$.
Given $j \in J$, we denote  $j^{\preceq}$ as the set of jobs that are scheduled no later than $j$ (inclusive) in the final optimal schedule.

If job $j$ is scheduled before job $c$ in the optimal schedule, i.e. $j \in c^{\preceq}$, then by Smith's Rule, we have 
$(J \setminus J^{'}) \cap j^{\preceq} = \emptyset$, i.e. jobs from $J \setminus J^{'}$ will not be scheduled before job $j$ in the optimal schedule because $j$ is scheduled before job $c$.
Therefore, $j^{\preceq} = (J^{'} \cap U(\sigma)) \cup (J^{'} \cap H)$, which implies that the completion time of job $j$ is $C_j = E + [J^{'} \cap H]$.
It corresponds to the first (resp. second) case of the equation, if $j \in H$ (resp. $j \in U$).

Otherwise, job $j$ is scheduled after job $c$ in the optimal schedule. 
By Smith's Rule, we have $J^{'} \subseteq j^{\preceq}$, i.e. jobs from $J^{'}$ must be scheduled before job $j$ in the optimal schedule because job $j$ is scheduled after job $c$. 
Moreover, we have $(J \setminus J^{'}) \cap (j^{\preceq} \setminus c^{\preceq}) = \emptyset$, i.e. jobs from $J \setminus J^{'}$ will not be scheduled between $c$ and $j$.
Therefore $j^{\preceq} \setminus c^{\preceq} = (J^{'} \cap (U \setminus U(\sigma)) ) \cup (J^{'} \cap B)$.
As a result, the completion time of job $j$ is 
$C_j = C_c(L) + [J^{'} \cap U] - E +  [J^{'} \cap B]$.
It corresponds to the third case of the equation, if $j \in B$ or $j \in U$.
%\qed
\end{proof}

\noindent
\textbf{Time Complexity:} 
Note that the values $[H \cap J^{'}]$, $[B \cap J^{'}]$ and $[U \cap J^{'}]$ could be precomputed, and they will not change the overall running time. In other words, the running time depends on the unassigned jobs.
The overall time complexity is $O(n^4 \hat{s}_{max}^2)$ where $\hat{s}_{max} = \max_{j \in U} \hat{s_j}$. 
Indeed, the dynamic programming has a table size $O(n^2 \hat{S})$, the computation of each value $dp(e,E,j)$ takes $O(1)$ operations, and the dynamic programming needs $O(\hat{S})$ time for $L \in \{0,1,...,\hat{S}\}$, thus in total the time complexity is
$O(n^2 \hat{S}^2)  = O(n^4 \hat{s}_{max}^2)$.
It is important that $\hat{s}_{max}$ only depends on the unassigned jobs $U$.

\subsection{Based On Job Weight}\label{dp-weight}
In this section, the unassigned jobs $U$ are required to have integer weight, as the running time of the dynamic programming depends on the weights of the unassigned jobs.
We assume for each job that its weight is already rounded to integer.
As this problem is highly symmetrical, we show that the dynamic programming in Section \ref{dp-size} could be applied by Theorem~\ref{thm-weight}.
For each job $j \in J$, we create a corresponding job $j^{*}$ with processing time $w_j$, weight $s_j$, and we obtain a new instance $J^{*}$, i.e. $\forall j^{*} \in J^{*}, ~w_{j^{*}} = s_j,~s_{j^{*}} = w_j$.

\begin{theorem} \label{thm-weight}
The reverse order of $f_{s}(n + 1 - k,c^{*},J^{*},B^{*},H^{*}, \lambda)$ is $f_{w}(k,c,J,H,B, \lambda)$.
\end{theorem}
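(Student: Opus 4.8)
The plan is to exhibit an objective-preserving bijection between the feasible schedules of the instance $(J,k,H,B)$ solved by $f_{w}$ and those of the swapped instance $(J^{*},n+1-k,B^{*},H^{*})$ solved by $f_{s}$, where the bijection is simply schedule reversal. Once reversal is shown to be a bijection that keeps the objective value unchanged, it must carry an optimal schedule of one instance to an optimal schedule of the other, which is exactly the claimed statement.

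First I would record the key algebraic identity. Writing a schedule of $J$ as $\sigma = (j_{b_1},\ldots,j_{b_n})$, we have $C^{\sigma}_{j_{b_i}} = \sum_{\ell \le i} s_{j_{b_\ell}}$, so the (rounded-weight) objective minimized by $f_{w}$ is $\sum_{i \ge \ell} \hat{w}_{j_{b_i}}\, s_{j_{b_\ell}}$. Let $\sigma^{R}$ be the reverse order, now read as a schedule of $J^{*}$, and set $a_i = b_{n+1-i}$. Since $j^{*}$ has processing time $w_j$ and weight $s_j$, and the $f_{s}$ rounding gives $\hat{s}_{j^{*}} = \ce{\lambda w_j} = \hat{w}_j$, the objective of $f_{s}$ on $\sigma^{R}$ is $\sum_{i \ge \ell} s_{j_{a_i}}\, \hat{w}_{j_{a_\ell}}$. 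Reindexing by $p = n+1-i$ and $q = n+1-\ell$ turns the constraint $i \ge \ell$ into $p \le q$ and gives $\sum_{q \ge p} \hat{w}_{j_{b_q}}\, s_{j_{b_p}}$, which is precisely the $f_{w}$ objective of $\sigma$. Hence reversal preserves the objective, with the roundings matched exactly because $\hat{s}_{j^{*}} = \hat{w}_j$ and because the remaining factors $s_j$ are used unrounded in both dynamic programs.

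Next I would verify that reversal respects the side constraints, so that it really is a bijection between the two feasible sets. If $c$ occupies position $k$ in $\sigma$ (that is, $k-1$ jobs precede it), then $c^{*}$ occupies position $n+1-k$ in $\sigma^{R}$, matching the first argument of $f_{s}$. A job scheduled before $c$ is scheduled after $c^{*}$ and conversely, so the set $H$ forced before $c$ becomes a set forced after $c^{*}$, while the set $B$ forced after $c$ becomes a set forced before $c^{*}$; this is exactly the placement of $B^{*}$ (before $c^{*}$) and $H^{*}$ (after $c^{*}$) in the call $f_{s}(n+1-k,c^{*},J^{*},B^{*},H^{*},\lambda)$. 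As reversal is an involution, it is a bijection between the feasible schedules of the two instances.

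Combining these two facts, reversal is an objective-preserving bijection, so it sends an optimal schedule of one instance to an optimal schedule of the other. Therefore the reverse of the optimal schedule $f_{s}(n+1-k,c^{*},J^{*},B^{*},H^{*},\lambda)$ is an optimal schedule for $f_{w}(k,c,J,H,B,\lambda)$, as claimed. I expect the summation-swap identity of the second paragraph to be the only real content; the position and set correspondences are routine bookkeeping, and the single point that must be stated carefully is that the two roundings agree, which holds because $s_{j^{*}} = w_j$ forces $\hat{s}_{j^{*}} = \hat{w}_j$.
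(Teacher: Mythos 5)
Your proof is correct and follows essentially the same route as the paper: both establish that schedule reversal is a feasibility-preserving, objective-preserving bijection between the two instances, with the core step being the same summation-exchange identity (the paper reorganizes the double sum $\sum_{j} w_j \sum_{i \preceq_{\pi} j} s_i$ using the $\preceq_\pi$ relation, you do the equivalent reindexing $p = n+1-i$, $q = n+1-\ell$ explicitly). The only difference is cosmetic: you spell out that the roundings agree, $\hat{s}_{j^{*}} = \lceil \lambda w_j \rceil = \hat{w}_j$, a detail the paper's proof leaves implicit.
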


\begin{proof}
We denote $obj(I,k,\sigma)$ as the objective value of schedule $\sigma$ for the jobs $I$ with parameter $k$ (position constraint parameter).
Let $\pi$ be any feasible schedule for jobs $J$ with parameter $k$, and let $\pi^{'}$ be the reverse of $\pi$, i.e. $i \preceq_{\pi} j$ if and only if $j \preceq_{\pi^{'}} i$. We prove that
\[
obj(J,k,\pi) = obj(J^{*},n+1-k,\pi^{'})
\]

Firstly, in schedule $\pi$ there are $k-1$ jobs which are scheduled before job $c$ since $\pi$ is feasible for $J$ with parameter $k$.
Therefore, in schedule $\pi^{'}$ for $J^{*}$ there are $n-k$ jobs which are scheduled before job $c^{*}$ by definition of $\pi^{'}$. 
Moreover, in schedule $\pi$ jobs $H$ are scheduled before job $c$, then in schedule $\pi^{'}$ jobs $H^{*}$ are scheduled after job $c^{*}$. Similar analysis can be used for jobs $B$.
Consequently, schedule $\pi^{'}$ is a feasible schedule for $J^{*}$ with parameter $n+1-k$.
Then, we prove the theorem by the equation:
$
obj(J,k,\pi) = \sum_{j \in J} w_j \sum_{i \preceq_{\pi} j} s_i
=  \sum_{i \in J} s_i \sum_{i \preceq_{\pi} j} w_j 
=  \sum_{i \in J} w_{i^{*}} \sum_{i \preceq_{\pi} j} s_{j^{*}} 
=  \sum_{i \in J} w_{i^{*}} \sum_{j \preceq_{\pi^{'}} i} s_{j^{*}} 
= obj(J^{*},n + 1 - k,\pi^{'})
$.
In the first equality, we formulate the objective of schedule $\pi$ for $J$.
In the second equality, we reorganize the summation. 
In the third equality, for each $j \in J$ we substitute $w_j$ by $s_{j^{*}}$ and $s_j$ by $w_{j^{*}}$ as they have equal value. 
In the fourth equality, we replace
$\pi$ by $\pi^{'}$.

Consequently, the reverse order of the optimal solution for jobs $J^{*}$ with parameter $n+1-k$ is optimal for jobs $J$ with parameter $k$.
\end{proof}

\section{ Fully Polynomial-Time Approximation Scheme(FPTAS)} \label{sec:fptas}
In this section, we present the FPTAS algorithm.
Recall that the dynamic programming in previous section gives the optimal solution while the running time depends on job processing times (or job weights). 
The straightforward idea is to round the job processing times such that they are polynomial in $n$ and $1/\epsilon$ and then solve the rounded jobs via dynamic programming.
However, this technique does not guarantee $(1+\epsilon)$ - approximation where we will show an example.
Later, we extract information from this failure and design an FPTAS.

\paragraph*{Rounding Technique}
For each job $j \in J$, we round job processing time with parameter $\lambda$, i.e. $\hat{s}_j = \ce{\lambda \cdot s_j}$ with $\lambda = \frac{h(n,\frac{1}{\epsilon})}{ s_{max} }$ where $s_{max}$ is the maximum processing time of all jobs $J$ and $h(n,\frac{1}{\epsilon})$ is a function which is polynomial in $n$ and $1/\epsilon$.
We obtain the optimal schedule (denote by $\sigma$) for the rounded jobs via dynamic programming and analyze the performance of schedule $\sigma$ for jobs $J$.
Let $\pi$ be the optimal schedule for jobs $J$. The objective of $\sigma$ could be bounded:
\begin{equation}\label{eq:dp-round}
\begin{array}{l}
\sum_{j \in J} w_j \sum_{i \preceq_{\sigma} j} s_i 
 \le \sum_{j \in J} w_j \sum_{i \preceq_{\sigma} j} \hat{s}_i/\lambda 
 \le \sum_{j \in J} w_j \sum_{i \preceq_{\pi} j}\hat{s}_i/\lambda \\
 \le \sum_{j \in J} w_j \sum_{i \preceq_{\pi} j} (\lambda s_i + 1)/\lambda
 =  \opt(J) + \sum_{j \in J} w_j \sum_{i \preceq_{\pi} j} 1/\lambda\\
 \le  \opt(J) + (n/\lambda) \cdot \sum_{j \in J} w_j\end{array}
\end{equation}
\noindent
where in the first and third inequality we use $\lambda s_j \le \hat{s}_j \le \lambda s_j + 1  $, and in the second inequality we apply the fact that $\sigma$ is optimal for the rounded jobs.
Similarly, when we round job weights with a different parameter $\lambda$, i.e. $\hat{w}_j = \ce{\lambda \cdot w_j},\forall j \in J$, we would have
\begin{equation}\label{eq:dp-roundw}
\begin{array}{l}
\sum_{j \in J} w_j \sum_{i \preceq_{\sigma} j} s_i 
\le \opt(J) + (n/\lambda) \cdot \sum_{i \in J} s_i
\end{array}
\end{equation}

The error $(n/\lambda) \cdot \sum_{j \in J} w_j$ in Equation~\eqref{eq:dp-round} may not be bounded by $\epsilon \cdot \opt(J)$, where one would see from the following example.
In the example, we have $3$ jobs where $s_1 = 1,s_2 = 2$, $s_3 \gg 2 h(n,\frac{1}{\epsilon})$ and $w_1 = s_3 , w_2 = w_1+1, w_3 = 1$.
After rounding, $\hat{s}_1 = \hat{s}_2 = 1$ as $s_{max} = s_3$, therefore the optimal schedule for the rounded jobs will be $\sigma = (2 \prec 1 \prec 3)$, while the optimal schedule for original jobs is  $\pi = (1 \prec 2 \prec 3)$.
Therefore, the approximation ratio is $\frac{w_2*2+w_1*(1+2)+1*(1+2+s_3)}{w_1*1+w_2*(1+2)+1*(1+2+s_3)} \ge 17/16$, which is a constant.

Note that the error in Equation~\eqref{eq:dp-round} is $(n/\lambda) \cdot \sum_{j \in J} w_j = 
\frac{ n s_{max} \cdot \sum_{j \in J} w_j }{h(n,\frac{1}{\epsilon})}
$.
This error may not be bounded by $\epsilon \cdot \opt(J)$ if the objective value of optimal solution is small, comparing to the product of maximum job processing time and maximum job weight. 
Therefore, we focus on two such special jobs, job $u = \argmax_{j \in J} \nolimits ~\{s_j\}$ the job of largest processing time, and job $v = \argmax_{j \in J} \nolimits ~\{w_j\}$ the job of largest weight. 
Note that $s_{max} = s_u$ and that if job $v$ is scheduled after job $u$ or $u = v$ in the optimal solution, i.e. $v \succeq_{\pi} u$, we will have
$\opt(J) \ge w_v s_u$, then the error in Equation~\eqref{eq:dp-round} could be bounded when we 
take $h(n,\frac{1}{\epsilon}) = n^2 \cdot \frac{1}{\epsilon}$:
\[
(n/\lambda) \cdot \sum_{j \in J} w_j 
\le n^2 w_v / \lambda 
=  \frac{n^2 w_v \cdot s_u }{h(n,\frac{1}{\epsilon}) }
\le \frac{ n^2 }{ h(n,\frac{1}{\epsilon}) } \cdot \opt(J)  \le \epsilon \cdot \opt(J)
\]
Therefore, when the rounding technique fails, we would have $v \prec_{\pi} u$, i.e. the job of largest weight must be scheduled before the job of largest processing time. 
This property from the failure of rounding technique plays an important role in designing the FPTAS algorithm.

\begin{algorithm}[H]
\caption{FPTAS Algorithm $F\langle c,k,H,B,U\rangle$}\label{alg:ptas}

\begin{algorithmic}[1]

\INPUT Consisting of specified job $c$,
specified value $k$, 
set of jobs $H$ (resp. $B$) assigned to be scheduled before (resp. after) $c$, 
and set of unassigned jobs $U$. Here, $\{c\}, H, B, U$ are pairwise disjoint, and 
$H = \emptyset$ if and only if $B = \emptyset$.

\OUTPUT A sequence $\chi$ of jobs $U \cup H \cup B \cup \{c\}$ with exactly $k-1$ jobs scheduled before job $c$, or $\emptyset$ if no such sequence exists.

\State $J_r \gets U \cup H \cup B \cup \{c\}$, $u \gets \argmax_{j \in U} \nolimits ~\{s_j\}$, $v \gets \argmax_{j \in U} \nolimits ~\{w_j\}$ 
\State $\chi \gets $ an arbitrary feasible schedule of $J_r$
\Comment{best from following cases}

\State call $\textsc{CheckFeasibility}()$
\State call $\textsc{FixJob}()$
\State call $\textsc{RepeatSize}()$
\Comment{If $v \succ c$ in the optimal schedule}
\State call $\textsc{RepeatWeight}()$
\Comment{If $u \prec c$ in the optimal schedule}

\If{ $u \neq v $}
\Comment{If $v \prec c \prec u$ in the optimal schedule}
\State $\sigma \gets F\langle c,k,H \cup \{v \}, B \cup \{ u \},U \setminus \{u,v\}\rangle$, Update $\chi \gets best \{\chi,\sigma\}$
\label{alg-recursive}
\EndIf

\State \textbf{Return} $\chi$

\algstore{alg:break}
\end{algorithmic}
\end{algorithm}

\begin{algorithm}

\begin{algorithmic} [1]
\algrestore{alg:break}

\Procedure{CheckFeasibility}{}
\Comment{Check Feasibility}
\If{$|H| > k-1$ or $|H| + |U| < k -1 $}
\State \textbf{Return}	$\emptyset$
\ElsIf{$  |H| = k - 1  $} \Comment{termination case}
\State $ B \gets B \cup U$
\ElsIf{$|H| + |U| = k -1 $} \Comment{termination case}
\State $ H \gets H \cup U$
\EndIf

\State Sort jobs $H$ and $B$ by Smith's order respectively

\State $\chi \gets (H,c,B)$, \textbf{Return} $\chi$.

\EndProcedure
\;
\\\hrulefill

\Procedure{FixJob}{}
\Comment{Fix One Job}

\State $S\gets \sum_{j \in J_r} s_j$, $W\gets \sum_{j \in J_r} w_j$, $S^{'}\gets \sum_{j \in U} s_j$, $W^{'}\gets \sum_{j \in U} w_j$

\If {$S^{'} \le \epsilon S/n$ \textbf{and} $B = \emptyset$}
\Comment{termination case}
\State $H^{'} \gets$ the first $k-1$ jobs from $U$ by non-increasing order of job weight.
\State \textbf{Return} $F\langle c,k,H^{'},U\setminus H^{'}, \emptyset\rangle$
\label{alg-place-weight}

\ElsIf{$W^{'} \le \epsilon W/n$ \textbf{and} $H = \emptyset$}
\Comment{termination case}
\State $H^{'} \gets$ the first $k-1$ jobs from $U$ by non-decreasing order of job processing time.
\State \textbf{Return} $F\langle c,k,H^{'},U\setminus H^{'}, \emptyset\rangle$
\label{alg-place-size}

\ElsIf {$S^{'} \le \epsilon S/n$ \textbf{and} $B \neq \emptyset$}
\Comment{place job $i$ as the last job, i.e. $i \succeq J_r$}
\State $i \gets \argmin_{j \in B} \{\delta_j\}$, $\chi' \gets F\langle c,k,\emptyset, \emptyset,
U \cup B \cup H \setminus \{i\} \rangle$. 
\State \textbf{Return} $(\chi',i)$ 
\label{alg-place-last}

\ElsIf {$W^{'} \le \epsilon W/n$ \textbf{and} $H \neq \emptyset$}
\Comment{place job $i$ as the first job, i.e. $i \preceq J_r$}
\State $i \gets \argmax_{j \in H} \{\delta_j\}$, 
$\chi' \gets F\langle c,k-1,\emptyset, \emptyset,
U \cup B \cup H \setminus \{i\} \rangle$.
\State \textbf{Return} $(i,\chi')$
\label{alg-place-first}
\EndIf

\EndProcedure 
\;
\\\hrulefill

\Procedure{RepeatSize}{}
\Comment{round job processing time}
\State copy $U,H,B$
\Comment{make copy of jobs}
\While{$|U| + |H| \ge k$}
\State $p \gets \argmax_{j \in U} \{s_j\}$, $\lambda \gets \frac{n^3}{\epsilon^2 s_{p}}$, 
\State $\sigma \gets f_s (k,c,J_r,H,B,\lambda)$, Update $\chi \gets best \{\chi,\sigma\}$.
\State $B \gets B \cup \{ p \}$, $U \gets U \setminus\{ p\}$.
\EndWhile 
\State $\sigma \gets F\langle c,k,H \cup U,B,\emptyset \rangle$, 
Update $\chi \gets best \{\chi,\sigma\}$
\EndProcedure
\;
\\\hrulefill

\Procedure{RepeatWeight}{}
\Comment{round job weight}
\State copy $U,H,B$
\Comment{make copy of jobs}
\While{$ |H| < k - 1$}
\State $q \gets \argmax_{j \in U} \{w_j\}$, 
$\lambda \gets \frac{n^3}{\epsilon^2 w_{q}}$, 
\State 
$\sigma \gets f_w (k,c,J_r,H,B,\lambda)$,
Update $\chi \gets best \{\chi,\sigma\}$.
\State $H \gets H \cup \{ q \}$, $U \gets U \setminus\{ q\}$.
\EndWhile 
\State $\sigma \gets F\langle c,k,H,B \cup U,\emptyset \rangle$, 
Update $\chi \gets best \{\chi,\sigma\}$
\EndProcedure
\end{algorithmic}
\end{algorithm}

\paragraph*{FPTAS Algorithm}
From the above analysis, the rounding technique could possibly fail to return a good solution, which we never know. 
In case that the failure happens, we would assign some jobs based on the property from such failure that the objective value of the optimal schedule is small (i.e. the job of largest weight must be scheduled before the job of largest processing time).
Afterwards we run the rounding technique again, and still a good solution may not be returned.
Indeed, we could recursively assign jobs and  apply the rounding technique. However, as more and more jobs are assigned, the unassigned jobs will have small weight and processing time, which will not reflect the objective value of the optimal schedule. 
In other words, the above property will not hold any more.
Instead, we would fix the position of one job when such case happens, i.e. the job weight or job processing time of the unassigned job is small enough.

The FPTAS algorithm has many rounds. 
In each round, we aim to fix the position of one job.
More precisely, we make this job as the first job (or last job), then we take the remaining jobs as a new instance (update constraint parameter $k$ accordingly) and start over.
We guarantee that the performance of the solution lose by a factor of $(1+\epsilon/n)$ each time when we fix one job.
Let $J_r$ be the remaining jobs in current round, i.e. jobs $J \setminus J_{r}$ are already fixed. 
We would take $J_r$ as an instance, and let $\pi$ be the optimal schedule of jobs $J_r$.
In order to find and fix one job from $J_r$, the algorithm will go into many iterations and assign jobs into sets $H \subset J_r,B \subset J_r$ such that either $H = B = \emptyset$ or $H \neq \emptyset,~B \neq \emptyset$, where jobs $H$ (resp. $B$) are determined to be scheduled before (resp. after) job $c$. 
Let $U = J_r \setminus (H \cup B \cup \{c\})$ be the unassigned jobs.
Let $S = \sum_{j \in J_r} s_j$, $W = \sum_{j \in J_r} w_j$ and $S^{'} = \sum_{j \in U} s_j$, $W^{'} = \sum_{j \in U} w_j$.
The algorithm handles the problem separately according to the following inequalities.
\begin{align}
S^{'} & \le \epsilon S / n \label{ineq-S}\\
W^{'} & \le \epsilon W / n \label{ineq-W}
\end{align}
\begin{lemma}\label{lmm:fixjob}
Assume that the optimal schedule assign jobs $H$ (resp. jobs $B$) before (resp. after) job $c$, if inequality \eqref{ineq-S} or \eqref{ineq-W} holds, we are able to either

\noindent \textbf{i)} obtain a feasible schedule with $(1+\epsilon)$-approximation, or

\noindent \textbf{ii)} fix one job from $J_r$ as the first job or last job by losing at most a factor of $(1+\epsilon/n)$ comparing to the optimal schedule of $J_r$.

\end{lemma}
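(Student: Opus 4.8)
The plan is to prove the lemma by a case analysis that mirrors the four branches of \textsc{FixJob}, governed by whether \eqref{ineq-S} or \eqref{ineq-W} holds and by whether the already-assigned sets are empty. First I would halve the work using the size--weight duality of Theorem~\ref{thm-weight}: replacing each job $j$ by $j^{*}$ (swapping $s_j\leftrightarrow w_j$) and reversing the order turns inequality \eqref{ineq-W} into \eqref{ineq-S} (since $\sum_{j^{*}}s_{j^{*}}=W$ and $\sum_{U}s_{j^{*}}=W'$) and interchanges the roles of $H$ and $B$, of ``first job'' and ``last job,'' and of the density orders ($\delta_{j^{*}}=1/\delta_j$). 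Hence it suffices to analyze the two branches in which \eqref{ineq-S} holds; the \eqref{ineq-W} branches follow verbatim in the dual instance. Throughout I take $\pi$ to be an optimal schedule of $J_r$ consistent with the hypothesized assignment (jobs of $H$ before $c$, jobs of $B$ after $c$), so that $\mathrm{obj}(\pi)=\opt(J_r)$.

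Case \eqref{ineq-S} with $B=\emptyset$ (hence $H=\emptyset$): here $S'=S-s_c\le\epsilon S/n$ forces $s_c\ge(1-\epsilon/n)S$, so $c$'s processing time dominates, and I would establish outcome (i). Consider the \emph{idealized} instance obtained by setting the processing time of every job other than $c$ to $0$; there a schedule with before-set $A$ and after-set $\bar A$ has objective exactly $s_c\,(w_c+\sum_{j\in\bar A}w_j)$, minimized precisely by placing the $k-1$ largest weights before $c$ — the greedy rule of line~\ref{alg-place-weight}. The real objective of any fixed schedule exceeds its idealized objective by a residual that is nonnegative and at most $2S'W$ (collecting the Smith costs inside the two groups and the ``push'' of the small sizes onto $c$ and the later jobs). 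I would then bound $\mathrm{obj}(\text{greedy})-\opt(J_r)$ not by the residual alone but by the \emph{difference} of the two residuals, exploiting that greedy already minimizes the dominant idealized term: when the optimal and greedy partitions coincide the schedules and residuals agree exactly, and otherwise greedy's strictly smaller after-$c$ weight is meant to supply the slack that absorbs the residual gap.

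Case \eqref{ineq-S} with $B\neq\emptyset$ (hence $H\neq\emptyset$): here I would prove outcome (ii) by fixing $i=\argmin_{j\in B}\delta_j$ as the last job (line~\ref{alg-place-last}). Because $i$ has smallest density in $B$ and $\pi$ obeys Smith's order within the after-$c$ block, the only jobs that can follow $i$ in $\pi$ are unassigned ones, of total size $S'\le\epsilon S/n$. Moving $i$ to the very end therefore raises only $C_i$, by at most $S'$, while the jumped-over completion times strictly decrease, so the loss is at most $w_iS'$. Since at most $S'$ of the total size lies after $i$ in $\pi$, we have $C_i^{\pi}\ge S-S'\ge(1-\epsilon/n)S$, whence $w_iS\le\opt(J_r)/(1-\epsilon/n)$ and the loss $w_iS'\le\epsilon w_iS/n$ stays within the stated $(1+\epsilon/n)$ factor (after a short adjustment of the lower-order term). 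Removing the now-last job $i$ leaves a feasible sub-instance on $J_r\setminus\{i\}$ with $c$ still at position $k$, and appending $i$ (so $C_i=S$) to any schedule of it gives $\opt(J_r\setminus\{i\})+w_iS\le(1+\epsilon/n)\opt(J_r)$, exactly the claimed reduction.

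I expect the main obstacle to be the termination case (outcome (i)). The naive estimate $\mathrm{obj}(\text{greedy})\le\opt(J_r)+2S'W$ only yields a constant factor, because $\opt(J_r)$ can be as small as $\Theta(\epsilon SW/n)$ when almost all weight sits before $c$. The crux is therefore the refined comparison via residual \emph{differences} together with the exact optimality of greedy on the idealized instance: whenever the greedy and optimal partitions differ, greedy's strictly smaller after-$c$ weight, scaled by the large $s_c\ge(1-\epsilon/n)S$, must be shown to dominate the at-most-$2S'W$ gap between the residuals. Making this domination quantitative — so that it holds for every pair of differing partitions, not merely to leading order — is the delicate point I would need to pin down; the fix-job case, by contrast, reduces to the short completion-time calculation above once the Smith-order observation pins down what can follow $i$.
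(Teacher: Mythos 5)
Your treatment of outcome \textbf{ii)} (Cases 3 and 4 of the paper, the \textsc{FixJob} branches with $B\neq\emptyset$ or $H\neq\emptyset$) is correct and essentially identical to the paper's: Smith's order forces all assigned jobs to precede $i=\argmin_{j\in B}\delta_j$, so $C_i^{\pi}\ge S-S'$, and moving $i$ to the end costs a factor at most $S/(S-S')$. Your duality reduction via Theorem~\ref{thm-weight} to dispose of the \eqref{ineq-W} branches is also sound, and is a cleaner formalization of the paper's ``a similar argument could be constructed.''

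The genuine gap is in outcome \textbf{i)}, the termination case, which is the crux of the lemma --- and you have flagged but not closed it. Your plan is to write $\mathrm{obj}(\chi)-\opt(J_r)$ as (idealized difference) $+$ (residual difference) and to let greedy's smaller after-$c$ weight, scaled by $s_c$, absorb the residual gap. That mechanism does not work as stated. Writing $L,R$ (resp.\ $\tilde L,\tilde R$) for the before/after sets of the optimum $\pi$ (resp.\ of the greedy schedule $\chi$), the residual difference contains the term $\sum_{j\in L\cap\tilde L} w_j\bigl(C_j^{\chi}-C_j^{\pi}\bigr)$, which is only bounded by $S'\sum_{j\in L\cap\tilde L}w_j$ in the crude estimate. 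This term is controlled neither by $\epsilon\cdot\opt(J_r)$ (the optimum may pay almost nothing for the jobs in $L\cap\tilde L$, since their processing times and hence completion times can be arbitrarily small) nor by the idealized slack $s_c\bigl(\sum_{R}w_j-\sum_{\tilde R}w_j\bigr)$, which can be arbitrarily close to $0$ even when the partitions differ. Any argument that kills this term must exploit a per-job cancellation, not a global comparison of residuals. That is exactly what the paper's proof does, and it is the idea missing from your proposal: it builds the intermediate schedule $\psi=(L\cap\tilde L,\;R\cap\tilde L,\;\{c\},\;\tilde R)$, where the jobs of $L\cap\tilde L$ are placed first \emph{in their order under $\pi$}, so that $C_j^{\psi}\le C_j^{\pi}$ for every $j\in L\cap\tilde L$ (these jobs are charged at their own optimal completion times and cancel exactly); Smith-optimality of $\chi$ within each block gives $\mathrm{obj}(\chi)\le\mathrm{obj}(\psi)$; the switched jobs $R\cap\tilde L$ are charged only $\epsilon S/n$ each, which is compared against their \emph{own} optimal cost $C_j^{\pi}\ge s_c\ge(1-\epsilon/n)S$ (they lie in $R$); and the jobs of $\tilde R\cup\{c\}$ are charged $S$ each, using the weight comparison $\sum_{\tilde R}w_j\le\sum_R w_j$ against the optimal cost $s_c\sum_{R\cup\{c\}}w_j$. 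This yields the ratio $\frac{1+\epsilon/n}{1-\epsilon/n}\le 1+\epsilon$. In short: the weight comparison is used only for the after-$c$ block, while the dangerous before-$c$ term is eliminated by rearrangement, not absorbed by slack; without this step your proof of outcome \textbf{i)} does not go through.
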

\begin{proof}
In the following cases, we claim that i) could be achieved if Case 1) or 2) happens and ii) could be achieved if Case 3) or 4) happens (refer to Algorithm \ref{alg:ptas} procedure \textsc{FixJob}).

\textbf{Case 1)} If $S^{'}  \le \epsilon S / n$ and $B = H = \emptyset$, we assign jobs $H^{'}$ before job $c$, jobs $U \setminus H^{'}$ after job $c$ and terminate the algorithm, where $H{'}$ are the first $k-1$ jobs from $U$ by non-increasing order of job weight. 
Let $\chi$ be the corresponding schedule.
In the optimal schedule $\pi$, let $L \subset J_r$ (resp. $R \subset J_r$) be the set of jobs scheduled before (resp. after) job $c$. 
In schedule $\chi$, we use the corresponding notation $\tilde{L},\tilde{R}$.
Schedule $\psi = (L\cap \tilde{L},R\cap \tilde{L},\{c\}, \tilde{R})$ is obtained based on schedule $\chi$ by advancing and rearranging jobs $L\cap \tilde{L}$ as the order in the optimal schedule, 
hence the completion time of jobs $L\cap \tilde{L}$ in schedule $\psi$ is at most that in the optimal schedule, i.e. $C^{\psi}_j \le C_j^{\pi}, ~ \forall j \in L \cap \tilde{L}$.
Since jobs $\tilde{L}$ (resp. $\tilde{R}$) in schedule $\chi$ are ordered by Smith Rule, the objective value of $\chi$ is at most that of $\psi$.
Note that $S = S' + s_c$ as $B = H = \emptyset$, then we have $C^{\psi}_j \le S - s_c \le \epsilon S /n, ~ \forall j \in R \cap \tilde{L}$ because these jobs are scheduled before job $c$, and $C^{\psi}_j \le S, ~ \forall j \in \tilde{R}$.
Thus the objective value of $\psi$ is at most:
\[
\begin{array}{ll}
&\sum_{j \in L \cap \tilde{L}} \nolimits w_j C_j^{\pi} + \sum_{j \in R \cap \tilde{L}} \nolimits  w_j (\epsilon S /n) + \sum_{j \in \tilde{R} \cup \{c\}} \nolimits w_j S \\
 &\le
\sum_{j \in L \cap \tilde{L}} \nolimits w_j C_j^{\pi} + (1 + \epsilon /n ) S \sum_{j \in R \cup \{c\}} \nolimits w_j \\
 &\le \frac{1 + \epsilon /n}{1 - \epsilon /n} \sum_{j \in J_r} w_j C_j^{\pi}
\end{array}
\]
where in the first inequality we apply $\sum_{j \in \tilde{R}} w_j \le \sum_{j \in R} w_j $ as jobs $\tilde{L}$ are selected by job weight from jobs $U$,
and in the second inequality we apply $C_j^{\pi} \geq s_c \ge S(1-\epsilon/n), ~\forall j \in R \cup \{c\} $.
As $\frac{1 + \epsilon/n}{1- \epsilon/n} \leq 1 + \epsilon$
for $n \geq 3$ (one would enumerate all possible solutions for $n \leq 2$).
The claim follows.

\textbf{Case 2)} If $W^{'}  \le \epsilon W / n$ and $B = H = \emptyset$, we assign jobs $H^{'}$ before job $c$, jobs $U \setminus H^{'}$ after job $c$ and terminate the algorithm, where $H{'}$ are the first $k-1$ jobs from $U$ by non-decreasing order of job processing time. A similar argument could be constructed as Case 1).

\textbf{Case 3)} If $S^{'}  \le \epsilon S / n$ and $B \not = \emptyset$, we place job 
$i = \argmin_{j \in B} \{\delta_j\}$ as the last job among $J_r$ and reduce to subproblem by taking the remaining jobs $J_r\setminus\{i\}$ as a new instance.
Let $\chi$ be the schedule transformed from $\pi$ by placing job $i$ after jobs $J_r \setminus \{i\}$.
Schedule $\chi$ is feasible because job $i$ must be scheduled after job $c$ in the optimal schedule as $i \in B$.
Hence, after transformation the completion time of any job of $J_r \setminus \{i\}$ in schedule $\chi$ is at most that in schedule $\pi$.
By assumption, in the optimal schedule $\pi$, job $i$ must be scheduled after all jobs in $(H \cup B \cup \{c\}) \setminus \{i\}$, we have $C_i^{\pi} \ge \sum_{j \in J_r \setminus U } s_j = S - S^{'}$ and $C_i^{\chi} = S$.
Therefore,
\[
\frac{C_i^{\chi}}{C_i^{\pi}} \le \frac{S}{S-S^{'}} \le 1 + \frac{\epsilon}{n-\epsilon}
\]

\textbf{Case 4)} If $W^{'}  \le \epsilon W / n$ and $H \not = \emptyset$, we place job 
$i = \argmax_{j \in H} \{\delta_j\}$ as the first job among $J_r$. 
A similar argument could be constructed as Case 3).
\end{proof}

\begin{lemma}
Assume the rounding technique fails to return $(1+\epsilon)$-approximation solution every time, then either inequality \eqref{ineq-S} or \eqref{ineq-W} will hold after at most $n$ iterations.
\end{lemma}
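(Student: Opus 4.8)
The plan is to read an ``iteration'' as one firing of the recursive call on line~\ref{alg-recursive} of Algorithm~\ref{alg:ptas}, which replaces the unassigned set $U$ by $U\setminus\{u,v\}$, and to argue along the branch of the recursion tree whose guesses agree with the optimal schedule $\pi$ of $J_r$ (so that $H$ is before $c$ and $B$ is after $c$ in $\pi$, as in Lemma~\ref{lmm:fixjob}). The proof splits into two parts: a counting bound on how deep this recursion can go, and a sub-claim that rules out the recursion halting with $u=v$ while both \eqref{ineq-S} and \eqref{ineq-W} fail. The sub-claim is exactly where the hypothesis ``the rounding fails every time'' is used.

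First I would set up the counting. The call on line~\ref{alg-recursive} fires precisely when $u\neq v$, and it then deletes the two distinct jobs $u,v$ from $U$. Since $U=J_r\setminus\{c\}$ at the start of the round we have $|U|\le n-1$, so the recursion depth is at most $\lfloor (n-1)/2\rfloor<n$. Moreover, as soon as $U$ becomes empty we have $S'=W'=0$, so both \eqref{ineq-S} and \eqref{ineq-W} hold. Hence within at most $n$ iterations the recursion must stop, either because $U$ is exhausted or because it reaches a configuration with $u=v$.

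Next I would prove the sub-claim: if $u=v$ (write $j^\ast$ for this job) and neither \eqref{ineq-S} nor \eqref{ineq-W} holds, then the rounding step already returns a $(1+\epsilon)$-approximation. The key estimate is the unconditional lower bound $\opt(J_r)\ge w_{j^\ast}C_{j^\ast}^{\pi}\ge w_{j^\ast}s_{j^\ast}$, valid because any job finishes no sooner than its own processing time. Running \textsc{RepeatSize} with $\lambda=\frac{n^3}{\epsilon^2 s_{j^\ast}}$ incurs, by \eqref{eq:dp-round}, an error of $(n/\lambda)W=\frac{\epsilon^2 s_{j^\ast}W}{n^2}$; and since \eqref{ineq-W} fails, $w_{j^\ast}=\max_{j\in U}w_j\ge W'/|U|>\frac{\epsilon W}{n^2}$, so $\opt(J_r)>\frac{\epsilon s_{j^\ast}W}{n^2}$ and the error stays below $\epsilon\,\opt(J_r)$. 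Symmetrically, should one prefer \textsc{RepeatWeight}, the failure of \eqref{ineq-S} gives $s_{j^\ast}>\frac{\epsilon S}{n^2}$ and the matching bound follows from \eqref{eq:dp-roundw}. Either way rounding succeeds, contradicting the hypothesis.

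Finally I would stitch the two parts together. Under the hypothesis, whenever neither inequality holds the sub-claim forces $u\neq v$, so line~\ref{alg-recursive} keeps firing and strictly shrinks $U$; by the counting bound this can happen at most $\lfloor(n-1)/2\rfloor<n$ times before $U$ empties or the recursion halts at $u=v$, and in both terminal situations one of \eqref{ineq-S}, \eqref{ineq-W} must hold. The step I expect to be the main obstacle is the sub-claim: isolating the right lower bound $\opt(J_r)\ge w_{j^\ast}s_{j^\ast}$ for the coincident maximiser and checking it against the rounding error $\frac{\epsilon^2 s_{j^\ast}W}{n^2}$ (and its weight-dual via \eqref{eq:dp-roundw}). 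The counting part is routine once one notes that each iteration removes two jobs from $U$.
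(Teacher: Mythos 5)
There is a genuine gap in your stitching step. Your counting bound and your sub-claim for the coincident case $u=v$ are both sound (the sub-claim is essentially the paper's error estimate specialized to $u=v$: failure of \eqref{ineq-W} gives $w_{j^\ast}\ge W'/n>\epsilon W/n^2$, so $\opt(J_r)\ge w_{j^\ast}s_{j^\ast}$ dominates the rounding error $\epsilon^2 s_{j^\ast}W/n^2$ by a factor $1/\epsilon$). But from ``$u\neq v$'' you jump to ``line~\ref{alg-recursive} fires \emph{and} its assignment ($v$ into $H$, $u$ into $B$) agrees with $\pi$''. That agreement holds only in the configuration $v\prec_\pi c\prec_\pi u$. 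You never rule out the other configurations with $u\neq v$: (a) $v\succeq_\pi u$, and, crucially, (b) $v\prec_\pi u$ with both jobs on the \emph{same} side of $c$, i.e.\ $c\prec_\pi v\prec_\pi u$ or $v\prec_\pi u\prec_\pi c$. Configuration (a) can be dismissed by the same estimate as your sub-claim (if $u\preceq_\pi v$ then $C^\pi_v\ge s_u$, so $\opt(J_r)\ge w_v s_u$ and the size-rounding succeeds, contradicting the hypothesis), but in configuration (b) no such lower bound on $\opt(J_r)$ is available, a single rounding can genuinely fail, and the recursive call's guess is simply wrong relative to $\pi$ --- there is no branch of the line~\ref{alg-recursive} recursion tree ``agreeing with $\pi$'', so the premise of Lemma~\ref{lmm:fixjob} fails along every branch you follow and your induction has nothing to stand on.

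Configuration (b) is exactly what the procedures \textsc{RepeatSize} and \textsc{RepeatWeight} exist for, and handling it is the heart of the paper's proof. For $c\prec_\pi v\prec_\pi u$, the paper assigns the largest-size unassigned job $u$ to $B$ (which \emph{is} consistent with $\pi$), re-runs the size-rounding DP with the new maximizer $u'$, and argues that as long as \eqref{ineq-W} keeps failing, each failed rounding forces $v\prec_\pi u'$, so the configuration persists; since $v$ is never removed from $U$, eventually $u'=v$, at which point $v\succeq_\pi u'$ holds trivially and the rounding must succeed --- contradicting the hypothesis. Hence, under the hypothesis, \eqref{ineq-W} must begin to hold at some point during this while-loop, which is precisely the lemma's conclusion for that case; the symmetric chain with weight-rounding and \eqref{ineq-S} handles $v\prec_\pi u\prec_\pi c$. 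Your proof covers only two of the four possible configurations ($u=v$, and $v\prec_\pi c\prec_\pi u$), so as written it does not establish the lemma.
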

\begin{proof}
Initially, we have $H = B = \emptyset, U = J_r \setminus \{c\}$.
Suppose $S^{'}  > \epsilon S / n $ and $W^{'}  > \epsilon W / n$.
Let job $u = \argmax_{j \in U} \nolimits ~\{s_j\}$ be the job of largest processing time among unassigned jobs, and let $v = \argmax_{j \in U} \nolimits ~\{w_j\}$.
In each iteration, we first apply the rounding technique (round job processing time) with parameter $ \lambda = \frac{n^3}{\epsilon^2 s_{u}}$. 
Note that the time complexity of dynamic programming in Section~\ref{sec:dp} only depends on unassigned jobs.

If $v \succeq_{\pi} u$, we claim that the rounding technique will return $(1+\epsilon)$-approximation solution due to $W^{'}  > \epsilon W / n$, as the error in Equation~\eqref{eq:dp-round} could be bounded
\[
(n/\lambda) \cdot \sum_{j \in J_r} w_j
= \frac{\epsilon^2 s_u}{n^2} \cdot W
< \frac{\epsilon s_u}{n} \cdot W^{'}
\le \epsilon \cdot w_v s_u 
\le \epsilon \cdot \opt(J_r)
\]

Otherwise, $v \prec_{\pi} u$, we solve by three cases.

\textbf{Case 1)} If $c \prec_{\pi} v \prec_{\pi} u$. We assign job $u$ into set $B$ (as the optimal does) and remove job $u$ from $U$, i.e. $U^{'} = U \setminus \{u\}$. 
Then we apply the rounding technique (round job processing time) again with $ \lambda^{'} = \frac{n^3}{\epsilon^2 s_{u^{'}}}$ where $u^{'} = \argmax_{j \in U^{'}} \nolimits ~\{s_j\}$.
Note that job $v$ is still the job of largest weight among unassigned jobs $U^{'}$. We claim that we would have $v \prec_{\pi} u^{'}$ if the rounding technique fails again.
Similarly, if $v \succeq_{\pi} u^{'}$, we have 
\[
(n/\lambda^{'}) \cdot \sum_{j \in J_r} w_j
= \frac{\epsilon^2 s_{u^{'}}}{n^2} \cdot W
< \frac{\epsilon s_{u^{'}}}{n} \cdot W^{'}
\le \epsilon \cdot w_v s_{u^{'}} 
\le \epsilon \cdot \opt(J_r)
\]
That is, the rounding technique returns $(1+\epsilon)$-approximation solution and the claim follows.
Therefore, we have $c \prec_{\pi} v \prec_{\pi} u^{'}$ if the rounding technique fails again, which implies that the algorithm will stay on Case 1).
Hence, we continue assigning job $u^{'}$ into set $B$ until at some moment $v = u^{'}$ (refer to Algorithm \ref{alg:ptas} procedure \textsc{RepeatSize}).
Consequently, at least one rounding technique will succeed.

\textbf{Case 2)} If $v \prec_{\pi} u \prec_{\pi} c$. 
We apply the rounding technique of rounding job weights by taking $ \lambda = \frac{n^3}{\epsilon^2 w_{v}}$ (refer to Algorithm \ref{alg:ptas} procedure \textsc{RepeatWeight}).
The error in Equation~\eqref{eq:dp-roundw} could be bounded due to $S^{'}  > \epsilon S / n $:
\[
(n/\lambda) \cdot \sum_{j \in J_r} s_j
= \frac{\epsilon^2 w_v}{n^2} \cdot S
< \frac{\epsilon w_v}{n} \cdot S^{'}
\le \epsilon \cdot w_v s_u 
\]
A similar argument could be constructed to show that once Case 2) is triggered, the algorithm will stay on Case 2) and at least one rounding technique will succeed.

\textbf{Case 3)} If $v \prec_{\pi} c \prec_{\pi} u$. We assign 
job $v$ into set $H$ and job $u$ into set $B$, and continue to the next iteration (refer to Algorithm \ref{alg:ptas} line \ref{alg-recursive}).

\noindent
One would see that there will be at most $n$ iterations to assign all jobs, i.e. the procedure \textsc{RepeatSize} and \textsc{RepeatWeight} in Algorithm \ref{alg:ptas} will be called at most $n$ times.
\end{proof}

\begin{lemma}
Algorithm \ref{alg:ptas} is $(1 + 3 \epsilon)$ - approximation with time complexity $O(n^{13}/\epsilon^4 )$.
\end{lemma}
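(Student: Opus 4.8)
The plan is to prove the two assertions—the $(1+3\epsilon)$ ratio and the $O(n^{13}/\epsilon^4)$ running time—separately, treating the two preceding lemmas as black boxes. For the ratio I would induct on the number of jobs in the current instance $J_r$. Those lemmas together say that, for the branch that matches the structure of an optimal schedule $\pi$ of $J_r$, one invocation $F\langle c,k,H,B,U\rangle$ does exactly one of two things: either a successful rounding step inside \textsc{RepeatSize}/\textsc{RepeatWeight}, or Cases 1--2 of Lemma~\ref{lmm:fixjob}, already yields a schedule within $(1+\epsilon)$ of $\opt(J_r)$; or Cases 3--4 of Lemma~\ref{lmm:fixjob} fix one job $i$ as the first/last job and recurse on the fresh instance $J_r\setminus\{i\}$. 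Because the algorithm keeps the best $\chi$ over \emph{all} branches (\textsc{RepeatSize}, \textsc{RepeatWeight}, and the call on line~\ref{alg-recursive}), it is enough to follow the single branch consistent with $\pi$; the other branches can only improve $\chi$.

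The quantitative heart is turning ``fix one job, losing a factor $(1+\epsilon/n)$'' into a telescoping bound. When $i$ is fixed last, its completion time equals the constant $S=\sum_{j\in J_r}s_j$ in every schedule with $i$ last, so any schedule of $J_r\setminus\{i\}$ extends to one of $J_r$ of objective (inner objective)$+\,w_iS$; hence $\opt_{i\text{-last}}(J_r)=\opt(J_r\setminus\{i\})+w_iS$, and a ratio-$\rho$ solution of $J_r\setminus\{i\}$ gives a ratio-$\rho$ solution of the ``$i$-last'' problem (the fixed additive term only helps). Combining with $\opt_{i\text{-last}}(J_r)\le(1+\epsilon/n)\,\opt(J_r)$ from Lemma~\ref{lmm:fixjob} (and the symmetric ``first'' statement), the worst-case ratio $r(m)$ on $m$ jobs satisfies $r(m)\le\max\{\,1+\epsilon,\ (1+\epsilon/n)\,r(m-1)\,\}$. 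Unrolling over at most $n$ fixing steps and one terminating $(1+\epsilon)$ step gives $r(n)\le(1+\epsilon/n)^{n}(1+\epsilon)\le e^{\epsilon}(1+\epsilon)\le 1+3\epsilon$ for $\epsilon$ below a suitable constant (otherwise a constant-ratio schedule is trivial), as claimed.

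For the running time, the first task is to show the recursion does not blow up but is essentially a single chain. Whenever \eqref{ineq-S} or \eqref{ineq-W} holds, \textsc{FixJob} fires and returns, so \textsc{RepeatSize}, \textsc{RepeatWeight} and line~\ref{alg-recursive} are skipped, leaving at most one recursive descendant (the fixed subproblem on $m-1$ jobs). When neither holds, the only heavy descendant is line~\ref{alg-recursive}, which keeps $J_r$ fixed but shrinks $U$ by two, while the terminal calls inside \textsc{RepeatSize}/\textsc{RepeatWeight} pass $U=\emptyset$ and are caught by \textsc{CheckFeasibility} (a complete schedule, or infeasibility), contributing only $O(1)$ leaves. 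Thus the $F$-calls form a chain with at most $n$ fixing steps (each dropping $|J_r|$ by one) and at most $n/2$ line~\ref{alg-recursive} steps per value of $|J_r|$, i.e.\ $O(n^2)$ calls. It remains to charge the per-call cost: with $\lambda=\tfrac{n^3}{\epsilon^2 s_p}$ the largest rounded size among unassigned jobs is $\hat s_{\max}=\lceil n^3/\epsilon^2\rceil=O(n^3/\epsilon^2)$, so by the bound $O(n^4\hat s_{\max}^2)$ each call to $f_s$ (and, via Theorem~\ref{thm-weight}, each call to $f_w$) costs $O(n^{10}/\epsilon^4)$; each \textsc{RepeatSize}/\textsc{RepeatWeight} runs its while-loop $O(n)$ times, giving $O(n^{11}/\epsilon^4)$ per $F$-call, and multiplying by the $O(n^2)$ calls yields $O(n^{13}/\epsilon^4)$.

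The step I expect to be the main obstacle is precisely this ``single chain'' claim: making rigorous that no $F$-call spawns more than one non-leaf recursive descendant. It hinges on the exact control flow—that \textsc{FixJob} short-circuits the remaining procedures when it fires, and that the $U=\emptyset$ terminal calls of \textsc{RepeatSize}/\textsc{RepeatWeight} are genuinely absorbed by \textsc{CheckFeasibility} rather than triggering a fresh fixing recursion. If those terminal calls could themselves recurse into new fixing subproblems, the two side branches would multiply the $O(n^2)$ chain into an exponential tree, so the argument must pin down that each node has at most one heavy child; the approximation telescoping, by contrast, is routine once the two prior lemmas are in hand.
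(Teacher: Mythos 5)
Your proof is correct and follows essentially the same route as the paper's: a successful rounding or a \textsc{FixJob} termination case is $(1+\epsilon)$-close to $\opt(J_r)$, each job-fixing step compounds a $(1+\epsilon/n)$ loss into $(1+\epsilon/n)^{n}(1+\epsilon)\le 1+3\epsilon$, and the running time is $O(n^2)$ invocations of $F$, each dominated by $O(n)$ dynamic-programming runs costing $O(n^{10}/\epsilon^4)$ apiece. Your explicit induction on $|J_r|$, the additive-composition argument for fixing a job first/last, and the verification that the recursion tree is a single chain with cheap $U=\emptyset$ leaves are details the paper leaves implicit, but the decomposition and the accounting are identical.
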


\begin{proof}
We first prove that the algorithm will give 
$(1 + 3 \epsilon)$ - approximation solution.
The solution returned by the algorithm comes from either a successful rounding technique, or the termination case in the procedure \textsc{FixJob} (Line~\ref{alg-place-weight} and \ref{alg-place-size}), and before that the algorithm may have already fixed some jobs $J \setminus J_r$ (Line~\ref{alg-place-last} and \ref{alg-place-first}).
By Lemma~\ref{lmm:fixjob}, the termination case will return $(1+\epsilon)$-approximation solution, also a successful rounding technique will return $(1+\epsilon)$-approximation solution, comparing to $\opt(J_r)$.
When fixing one job, we would lose a factor of $(1+\epsilon/n)$ by Lemma~\ref{lmm:fixjob}, comparing $\opt(J_r)$.
One may think about transforming the optimal solution of jobs $J$ into our solution by fixing one job each time, then each time we still lose a factor of $(1+\epsilon/n)$ comparing to $\opt(J)$.
Therefore, the total approximation of fixing jobs will increase exponentially, which is $(1+\epsilon/n)^n = 1 + \epsilon + o(\epsilon^2)$.
Finally, the overall approximation is 
$(1 + \epsilon + o(\epsilon^2)) (1 + \epsilon) < 1 + 3 \epsilon$.

We show the time complexity of the algorithm.
After rounding, the largest job processing time or job weight of unassigned jobs is at most $O(n^3/\epsilon^2)$ as we take $ \lambda = \frac{n^3}{\epsilon^2 s_{u}}$ when rounding job processing time or $ \lambda = \frac{n^3}{\epsilon^2 w_{v}}$ when rounding job weight.
Therefore, each dynamic programming has running time $O(n^{10}/\epsilon^4)$, 
In each iteration, the dynamic programming procedure \textsc{RepeatSize} and \textsc{RepeatWeight} is called once, each procedure executing dynamic programming at most $n$ times.
We need to try $O(n)$ iterations to fix one job (Line~\ref{alg-place-last} and \ref{alg-place-first}) or terminate the algorithm (Line~\ref{alg-place-weight} and \ref{alg-place-size})
We need to fix at most $n$ jobs. 
Finally, the time complexity is $O(n^{13}/\epsilon^4 ) $.
\end{proof}

\section{Different Approach}\label{sec:trimspace}
In this section, we show that a different FPTAS could be constructed based on the approach by Woeginger \cite{woeginger2000does}.
Woeginger proposed some conditions to identify whether a dynamic programming could be transformed into an FPTAS, using the method of trimming state space.

First, we present a different dynamic programming, and then show that the conditions are satisfied.
We assume that jobs have integer weights and integer processing times.
Recall that job $c$ is indexed as $n$, and the remaining jobs $J \setminus \{c\}$ are sorted by Smith's order.
We start from the schedule which only contains job $c$, and then add remaining jobs into the schedule one by one.
The dynamic programming algorithm works with vector sets $\vs_1,...,\vs_{n-1}$ where in phase $j$ ($1 \le j \le n-1$) job $j$ is considered and $\vs_{j}$ is computed from $\vs_{j-1}$.
A state vector $[i,X,Y,Z]$ in $\vs_{j}$ encodes a partial schedule without idle time for jobs $\{1,2,...,j\} \cup \{c\}$: $i$ is the number of jobs that are scheduled before job $c$, $X$ (resp. $Y$) is the total processing time (resp. total weights) of jobs before (resp. after and include) job $c$, and $Z$ is the objective value for the partial schedule.

{\bf Initialization.} Set $\vs_0 = \{[0,0,w_c,w_c s_c]\}$.

{\bf Phase j.} For every vector $[i,x,y,z]$ in $\vs_{j-1}$, put the vectors
$[i,x,y+w_j,z+w_j(s_c + \sum_{i=1}^j s_i)]$ and
$[i+1,x+s_j,y,z+w_j(x+s_j) + y s_j]$ into $\vs_j$.

{\bf Output.} return the vector $[i,x,y,z] \in \vs_{n-1}$ that minimizes the $z$ value such that $i=k-1$.

Note that in phase $j$, by Smith's rule job $j$ can only be scheduled at the end or right before job $c$. 
If job $j$ is scheduled at the end, we just append job $j$ into the schedule, then the objective value only increases by the weighted completion time of job $j$, i.e. $w_j(s_c + \sum_{i=1}^j s_i)$.
Otherwise job $j$ is scheduled right before job $c$. When we insert job $j$ right before job $c$, besides the weighted completion time of job $j$, the completion time of those jobs that are scheduled after job $j$ will increase by $s_j$, therefore the objective value will increase by $w_j(x+s_j) + y s_j$.
Since the coordinates of all vectors in all sets $\vs_j$ are integers, the cardinality of every vector set $\vs_{j}$ is bounded from above by $O(n W^2 S^2)$, therefore the dynamic programming algorithm has a pseudo-polynomial time complexity of $O(n^2 W^2 S^2)$ where $W = \sum_{j \in J} w_j$ and $S = \sum_{j \in J} s_j$.

We show that an FPTAS exists from the conditions.
For $j=1,...,n$ we define the input vector $X_j = [w_j,s_j]$.
Let $\mathcal{F} = \{F_1,F_2\}$ where
\[
\begin{array}{ll}
F_1(w_j,s_j,i,x,y,z) &= [i,x,y+w_j,z+w_j(s_c + \sum_{i=1}^j s_i)]\\
F_2(w_j,s_j,i,x,y,z) &= [i+1,x+s_j,y,z+w_j(x+s_j) + y s_j]
\end{array}
\]
Let degree-vector $D=[0,1,1,1]$ and objective function $G(i,x,y,z) = z$ if $i=k-1$ and $\infty$ otherwise.
Then according to the approach by Woeginger \cite{woeginger2000does}, this dynamic programming is ex-benevolent and an FPTAS could be constructed.
Especially, the time complexity of the FPTAS is 
$
O(n^2 \log^2_{\Delta}{W} \log^2_{\Delta}{S})
= O(n^6 \log^2{W} \log^2{S}/ \epsilon^4)
$
where $\Delta = 1+\epsilon/n$.

\section{Conclusion and Discussion}\label{sec:conclusion}

We study the classical scheduling problem where one specific job must be scheduled at a specified position.
We give pseudo-polynomial time dynamic programs to solve this problem, which are polynomial in job processing time and job weight, respectively.
Moreover, we design a fully polynomial-time approximation scheme (FPTAS) of $(1+3 \epsilon)$-approximation with running time $O(n^{13}/\epsilon^4 ) $.
Our first method of using rounding technique based on pseudo-polynomial dynamic programs and fixing items one at a time in case the rounding fails may have further applications.
For the new approach of FPTAS in Section \ref{sec:trimspace}, the running time depends on the job weight ($\log{W} $) and job processing time ($\log{S} $), while our first FPTAS algorithm does not.

It remains open whether this problem is NP-hard or not.
It will also be interesting to study the multiple position constraints, i.e. two or more jobs have fixed positions. In this setting, our dynamic programming algorithm could be easily generalized and the rounding technique will not change too much while it is difficult to design the FPTAS algorithm to avoid exponential number of recursions.

\section*{Acknowledgment}
The work described in this paper was supported by a grant from Research Grants Council of the Hong Kong Special Administrative Region, China (Project No. CityU 11268616).

\bibliography{isaac2017}

\end{document}